\tikzstyle{block} = [draw,rectangle,thick,minimum height=2em,minimum width=2em]
\tikzstyle{sum} = [draw,circle,inner sep=0mm,minimum size=2mm]
\tikzstyle{connector} = [->,thick]
\tikzstyle{line} = [thick]
\tikzstyle{branch} = [circle,inner sep=0pt,minimum size=1mm,fill=black,draw=black]
\tikzstyle{guide} = []
\tikzstyle{snakeline} = [connector, decorate, decoration={pre length=0.2cm, post length=0.2cm, snake, amplitude=.4mm, segment length=2mm},thick, magenta, ->]
\numberwithin{equation}{section}
\def\begeq{\begin{equation}}
\def\endeq{\end{equation}}
\newtheorem{lemma}{Lemma}[section]
\newtheorem{theorem}{Theorem}[section]
\newtheorem{proposition}{Proposition}[section]
\newtheorem{definition}{Definition}[section]
\def\begp{\begin{proposition}}
\def\endp{\end{proposition}}
\def\begl{\begin{lemma}}
\def\endl{\end{lemma}}
\def\begt{\begin{theorem}}
\def\endt{\end{theorem}}
\def\begd{\begin{definition}}
\def\endd{\end{definition}}
\newcommand{\la}{\langle}
\newcommand{\ra}{\rangle}
\newcommand{\tr}{\text{tr}}
\newcommand{\eps}{\varepsilon}
\newcommand{\SP}{\text{ }}
\newcommand{\nl}{\newline}
\newcommand{\ONE}{\mathbbm{1}}
\newcommand{\RR}{\mathbbm{R}}
\newcommand{\CC}{\mathbbm{C}}
\newcommand{\NN}{\mathbbm{N}}
\title{Completely Device Independent Quantum Key Distribution}
\author[1,3]{Edgar A Aguilar}
\author[2,3]{Ravishankar Ramanathan}
\author[4]{Johannes Kofler}
\author[2,3]{Marcin Paw\l owski}
\affil[1]{\textit{Institute of Mathematics, University of Gdansk,
80-952 Gdansk, Poland}} \affil[2]{\textit{Institute of Theoretical
Physics and Astrophysics, University of Gdansk, 80-952 Gdansk,
Poland}} \affil[3]{\textit{National Quantum Information Center of
Gdansk, 81-824 Sopot, Poland}}
\affil[4]{\textit{Max-Planck-Institute of Quantum Optics, 85748
Garching, Germany}}
\date{}
\begin{document}

\maketitle \vspace{-1cm}
\begin{abstract}
Quantum key distribution (QKD) is a provably secure way for two
distant parties to establish a common secret key, which then can be
used in a classical cryptographic scheme. Using quantum
entanglement, one can reduce the necessary assumptions that the
parties have to make about their devices, giving rise to
device-independent QKD (DIQKD). However, in all existing protocols
to date the parties need to have an initial (at least partially)
random seed as a resource. In this work, we show that this
requirement can be dropped. Using recent advances in the fields of
randomness amplification and randomness expansion, we demonstrate that
it is sufficient for the message the parties want to communicate to
be (partially) unknown to the adversaries -- an assumption without
which any type of cryptography would be pointless to begin with. One
party can use her secret message to locally generate a secret
sequence of bits, which can then be openly used by herself and the
other party in a DIQKD protocol. Hence, our work reduces the
requirements needed to perform secure DIQKD and establish safe
communication.
\end{abstract}

\section{Introduction}
Within the advancing quantum information technologies, quantum key
distribution (QKD) is arguably the technologically most advanced
field and has already entered the market with working product
solutions. In this quantum cryptographic protocol, two parties
usually named Alice and Bob exploit the laws of quantum physics to
produce a shared random key that remains unknown to the rest of the
world and which can then be used as a one-time pad in a classical
cryptographic scheme \cite{BB84,Eke91,Rev1,Rev2}.

The security of entanglement-based QKD protocols relies on the
violation of a Bell inequality \cite{Bell64} using pairs of quantum
entangled particles shared by Alice and Bob. Remarkably, it has been
shown that such entanglement-based protocols allow
device-independent QKD (DIQKD) \cite{BHK,VV}, in which the two
parties need not make any assumptions about the inner workings of
their devices, in particular the source that produces the systems
which the parties measure as well as their own measurement devices.
In principle, the measurement apparatuses can be bought from an
untrusted party, the eavesdropper Eve, and the particle pair source
can even be operated by her (as long as, for example, there are no hidden transmitters in the devices). Alice and Bob can still extract a
secret key by sufficiently violating a Bell inequality. However,
they are required to have access to a certain amount of randomness
which they use for their setting choices \cite{KPB,BPPW}. This is
related to the fact that no Bell inequality can be derived without
the "freedom-of-choice assumption" \cite{BCHS}.

For long messages, Alice and Bob need many settings to produce a key
long enough, such that it becomes infeasible to invent their own
random sequences bit by bit out of their heads. Hence, their
settings need to be produced by some sort of fast device. Such a
random number generator and its corresponding randomness must be
considered a resource in the protocol. However, it is impossible to
verify that any given random number generator is not determined by
some underlying mechanism which is simply unknown to the user but
not to the eavesdropper. Clearly, Alice and Bob should not buy their
randomness generators from Eve. Therefore, in some sense, the
assumptions in DIQKD are contradictory. While one does not trust the
measurement devices, one trusts the random number generators used
for the setting choices. Recent developments (e.g.\ regarding the
Dual Elliptic Curve Deterministic Random Bit Generator) have shown
that this trust can be problematic \cite{NSA}.

It is indeed possible to reduce the amount of required initial
randomness via randomness amplification and expansion. These
protocols exploit quantum correlations also in a device-independent
way \cite{CR,PAMetal,Galetal,MS,CSW,BRGetal}. The former field studies how,
given a source of imperfect randomness which is partially correlated
to the external world, one can produce a short string which is
completely uncorrelated and safe. The latter studies how, given a
finite amount of perfect random bits one can produce a longer
(potentially unbounded) random bit string. Both of these processes
have been generalized recently to achieve unbounded random strings
from finite min-entropy sources \cite{MS,CSW}. However, both
protocols require an initial (at least partially) random seed, and
there is no apparent way of getting around this assumption if one
wants to stick to the device-independent scenario.

We define a completely DIQKD (CDIQKD) protocol to be one which is
not only device-independent regarding the measurement apparatuses
and the pair source but which also does not need to make any
assumptions about the setting generators or initial random seeds. It
seems that this is an impossible task. The QKD community has been
working within the paradigm that if not at least one of the parties does
not have an initial (at least partially) random source, then sending
safe messages is not feasible.

In this paper, however, we show that the obstacle is surmountable
and that CDIQKD is indeed possible. The solution lies in the
observation that Alice and Bob do not really need their settings to
be random with respect to the whole universe. They only need
randomness with respect to Eve. Therefore, having a string which is
random to Eve and the devices used in the protocol is sufficient,
even though the string is not random with respect to an honest party
like Alice. And there is one thing, which is random to Eve due to
the fundamental underlying assumption in cryptography: the message
$\mathcal{X}$ which Alice wants to send to Bob. Without this trivial
assumption -- so basic that it usually is not even mentioned --,
there is no reasonable cryptographic task in the first place.

Our procedure seems counter-intuitive and risky, but in this paper
we give a proof of principle that it is secure. In the following, we
will show that Alice can use her secret message to locally generate
a secret sequence of bits, which can then be used by herself and Bob
as the settings in an entanglement-based QKD protocol.

\section{Background and Assumptions}
We will work with the standard QKD assumptions which, for the sake
of clarity, are listed below.

\vspace{0.5cm}\noindent\textit{Quantum Key Distribution Assumptions:}

\begin{itemize}[leftmargin=0.5cm]
\item[1.] \textit{Shielding.} A no-signaling condition is imposed on the components of each device, as well as between devices in both parties' laboratories.
\item[2.] \textit{Authenticated classical communication channel between parties.} This is not assumed to be secure, i.e.\ any classical communication is accessible to Eve. Furthermore, we consider this authenticated channel to be available to the parties as a black box resource, that was for instance previously established using a secret key.
\item[3.] \textit{Restriction to quantum theory.} The adversary can only prepare devices following the laws of quantum mechanics. In particular, she does not possess arbitrary no-signaling devices.
\item[4.] \textit{Message with randomness.} Alice possess a message $\mathcal{X}$ with $k$ min-entropy with respect to Eve and the devices, and can estimate this value.  $k$ needs to be sufficiently large.
\end{itemize}

These are the fundamental assumptions, without any of which the protocol could not guarantee security. For example, without assumption 1, there could be a transmitter in the devices telling Eve everything that is going on in the laboratories (including the secret message), or Eve could manipulate the devices externally. Furthermore, the protocols work assuming a Bell inequality was violated for which the components of the physical devices must not communicate, which for example, could be enforced by a spacelike separation. Assumption 2 is needed to avoid the ``Man in the Middle'' attack, even though this classical channel is accessible to the adversary. In the present work we consider the channel as a black box resource, see the Discussion for an elaboration. Assumption 3 may seem restrictive at a mathematically fundamental level, but this is also a standard assumption for security proofs such as in \cite{VV,CSW,MS,DPVR}, since super-quantum correlations have not been observed experimentally. Finally, our main assumption is that Alice's message $\mathcal{X}$ has some conditional min-entropy with respect to Eve and the devices, and that Alice is able to estimate this value. We argue that this is a sound assumption (and indeed usually left implicit), since if the message was not at least partially random to Eve, then performing a QKD protocol would lose all its point to begin with, as was already suggested in the concluding remark of \cite{ER}.

In this article, we think of \textit{conditional min-entropy}
$H_{\text{min}}$ operationally. If we have the classical quantum
state $\rho_{XE} = \sum_x P_X(x) |x\ra\la x| \otimes \rho_E^x$,
classical over $X$ and quantum over $E$, then the probability
that party $E$ correctly guesses the value of the random variable
$X$ is:
\[
p_{\text{guess}}(X|E) = \sum_x P_X(x)\tr[F_x \rho_E^x] = 2^{- H_{\text{min}}(X|E)_\rho}
\]
where $\{F_x\}$ is the optimal POVM on $E$ \cite{KRS}. In words, this means that the min-entropy quantifies how much of the string $X$ is unknown to system $E$. This is the standard way of quantifying randomness, by which we mean how much of a variable is unpredictable to a third party. In that sense, the ``most random variable" $X$ corresponds to the uniform distribution $U_X$ which is completely independent from everything else. In that case the min-entropy is simply the number of random bits, $H_{\text{min}}(X|E) = |X|$.


By \textit{Randomness Extractors} Ext($k,\eps$), we refer to
deterministic algorithms, which take a source $X$ with min-entropy
$k$, together with a uniform random seed of length $d$, to produce
an output of length $m$, which is an $\eps$-distance from the
uniform distribution. We shall use Trevisan's extractor \cite{Tre},
which was proven to be secure against quantum adversaries in
\cite{DPVR}, following the works of \cite{RRV,TSSR}. See Appendix B,
for a rigorous treatment.

A powerful observation which we will also need is the
\textit{Equivalence Lemma} from \cite{CSW}. The lemma states that
the security of  protocols using perfectly random strings depends on
these strings being perfectly random to the devices, and requiring
perfect randomness to both the devices and the adversaries is not
necessary. This is formally stated in the appendix as Lemma A.1.
Since we are assuming that Eve doesn't signal to the devices, the
important thing then is that the devices are not preprogrammed to
receive certain inputs. If during the protocol Eve learns more about
what random seeds Alice and Bob will use, then even if she adapts her eavesdropping strategy she cannot gain any advantage, so long as the devices were distributed beforehand.

Chung, Shi, and Wu devised a protocol which can amplify any finite
source with min-entropy $k$, by using Trevisan's Extractors
Ext($k,\eps$) \cite{CSW}. They coined this procedure
\textit{Physical Randomness Extraction}, because they rely on
physical procedures which extract randomness in a secure manner
through Bell tests.  Their solution is to use Ext($k,\eps$) on the
min-entropy source with all $2^d$ possible seed strings of length
$d$, and feeding each hashed output to different implementations of the
physical extraction protocol (which here will be a randomness expansion protocol). By different implementations, we mean
using new devices on each run of the physical protocol as to
guarantee each input is really random with respect to the devices to
be used (i.e.\ there aren't any memory correlations between implementations). See the first part of
Figure 1.

For expansion, we will use the recent protocol by Miller and Shi
\cite{MS} (abbreviated as MS), which by itself gives cryptographic
security in the output  and is robust to noise. This protocol,
together with the Equivalence Lemma can take a min-entropy source
and produce unbounded expansion with only 2 untrusted devices. Following \cite{CSW,MS} we treat a device $D$ as
a black box, with which the experimenter can interact classically.
Each box $D$ will consist of $t$ spatially separated (no-signaling)
components which will play an XOR non-local game. Hence, the number
$t$ will depend on the nonlocal game to be played (e.g.\ for CHSH
$t=2$, and for GHZ $t=3$). See \cite{CHTW} for an exposition on XOR
games.

Currently, different DIQKD schemes exist that could work with our
protocol. Choosing which one to implement is a matter of taste,
since different Bell inequalities have different advantages. For
example, the protocols \cite{VV,MS} are robust against a constant
fraction of noise, while \cite{BCK2} is even safe against
no-signaling adversaries. What is common in these schemes though, is
that at least one of the parties must have access to an additional
source of randomness. Given that we would like our CDIQKD protocol to be noise tolerant, we propose to use one of \cite{VV,MS}. To our knowledge, these are the only available protocols which are secure against quantum adversaries, possessing quantum side information.

The last concepts we need to introduce are the security
parameters. The \textit{completeness error} $\eps_c$ bounds the
probability that we reject an honest implementation of the protocol,
$\mathbb{P}[\text{Reject}]\leq \eps_c$. The \textit{soundness
error} $\eps_s$ quantifies how random the output $Z$ is if we choose to
accept it. To see how, consider general output
states which are decomposed as $\Phi\circ\Gamma_{E}[\rho]=|\text{Acc}\ra \la \text{Acc}|
\otimes \sigma^{\text{Acc}}_{ZXDE} + |\text{Rej}\ra \la \text{Rej}|
\otimes \sigma^{\text{Rej}}_{ZXDE}$ , where $\Phi$ is the quantum
channel of the protocol, and $\Gamma_E$ is an arbitrary quantum channel on Eve's system. We require that there exists a
state $\xi$ such that $\xi_{ZXE}^{\text{Acc}} = U_Z\otimes \xi_{XE}$
and $ ||\sigma^{\text{Acc}}_{ZXE}- \xi_{ZXE}^{\text{Acc}}||\leq \eps_s$ .
Here, $\sigma^{\text{Acc}}_{ZXE}$ is the subnormalized output after tracing out the devices $D$, and $U_Z = \frac{1}{|Z|} \ONE$ is the uniform distribution. Most of the time though, we will just talk about the
\textit{security parameter} $\delta=\max(\eps_c,\eps_s)$, which
represents the worst error in both possible interpretations of the
word error.

The \textit{error tolerance parameter}, or noise level,
$\eta$, parametrizes how an actual implementation of an untrusted
device deviates from an honest one. That is, it is the maximum ratio of game rounds for which we observe an error (so that the observed correlations are not according to the optimal winning strategy).

\section{Key Distribution Protocol}
\begin{figure}[t]
\begin{center}
\includegraphics[scale=0.5]{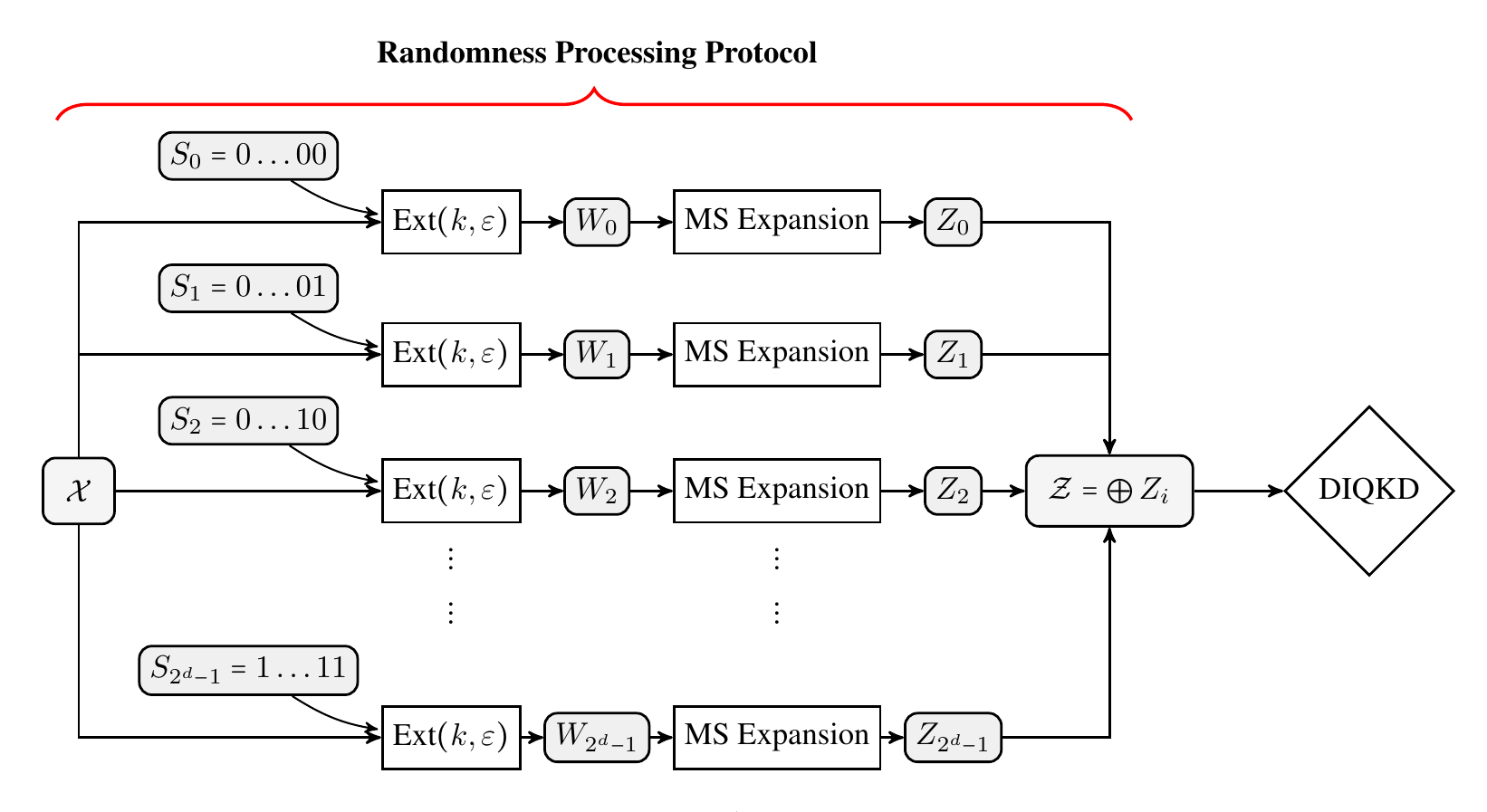}
\end{center}
\caption[]{(Color online) Schematic Representation of the Protocol. An $n$-bit
Message $\mathcal{X}$ is fed into Trevisan's Extractor with all
possible seeds $S$ of length $d$, which in turn is used to run the
Miller-Shi protocol for expansion. Finally, these outputs are summed
(modulus 2) to obtain the random seed $\mathcal{Z}$ used for the
DIQKD scheme.}
\end{figure}

For convenience, we divide our CDIQKD protocol into two parts:\ Randomness
Processing and Key Distribution. The randomness processing part
(which takes place entirely in Alice's laboratory) consists of
taking Alice's message $\mathcal{X}$ as a seed to create a string of
random numbers $\mathcal{Z}$ which will be used in the Key
Distribution Scheme (e.g.\ to choose measurement bases, which bits
to compare and test the Bell inequality on, or which hashing
function to use).

\vspace{0.5cm}\noindent\textit{Randomness Processing Protocol:}

\begin{itemize}[leftmargin=0.5cm]
\item[1.] Alice lists all possible bit strings ($S_0, S_1, \dots, S_{2^d-1}$) of length $d$.
\item[2.] Alice processes her message $\mathcal{X}$ with Trevisan's extractor, using all $2^d$ strings $S_i$ as possible seeds. Call the outputs $W_i = \text{Ext}[\mathcal{X},S_i]$.
\item[3.] Alice performs the MS unbounded randomness expansion protocol in parallel, on each $W_i$, and using different devices. The output of each expansion run is labeled $Z_i$.
\item[4.] $\mathcal{Z} = \bigoplus_{i} Z_i$
\end{itemize}
The actual size of $d=|S_i|$ and $m=|W_i|$ are specified in the next
section.


The randomness processing protocol to be used, is the composition of
the protocols proposed by \cite{CSW} and \cite{MS}, as is depicted in Figure 1. The ideal objective of the protocol is to obtain a random string $\mathcal{Z}$, independent from the input message $\mathcal{X}$, such that $|\mathcal{Z}| \gg |\mathcal{X}|$. In fact, the expansion protocol used allows us to make the output $\mathcal{Z}$ unbounded, so that Alice can be confident she will have enough random bits to feed the DIQKD protocol.

\begin{figure}[t]
\begin{center}
\includegraphics[scale=0.65]{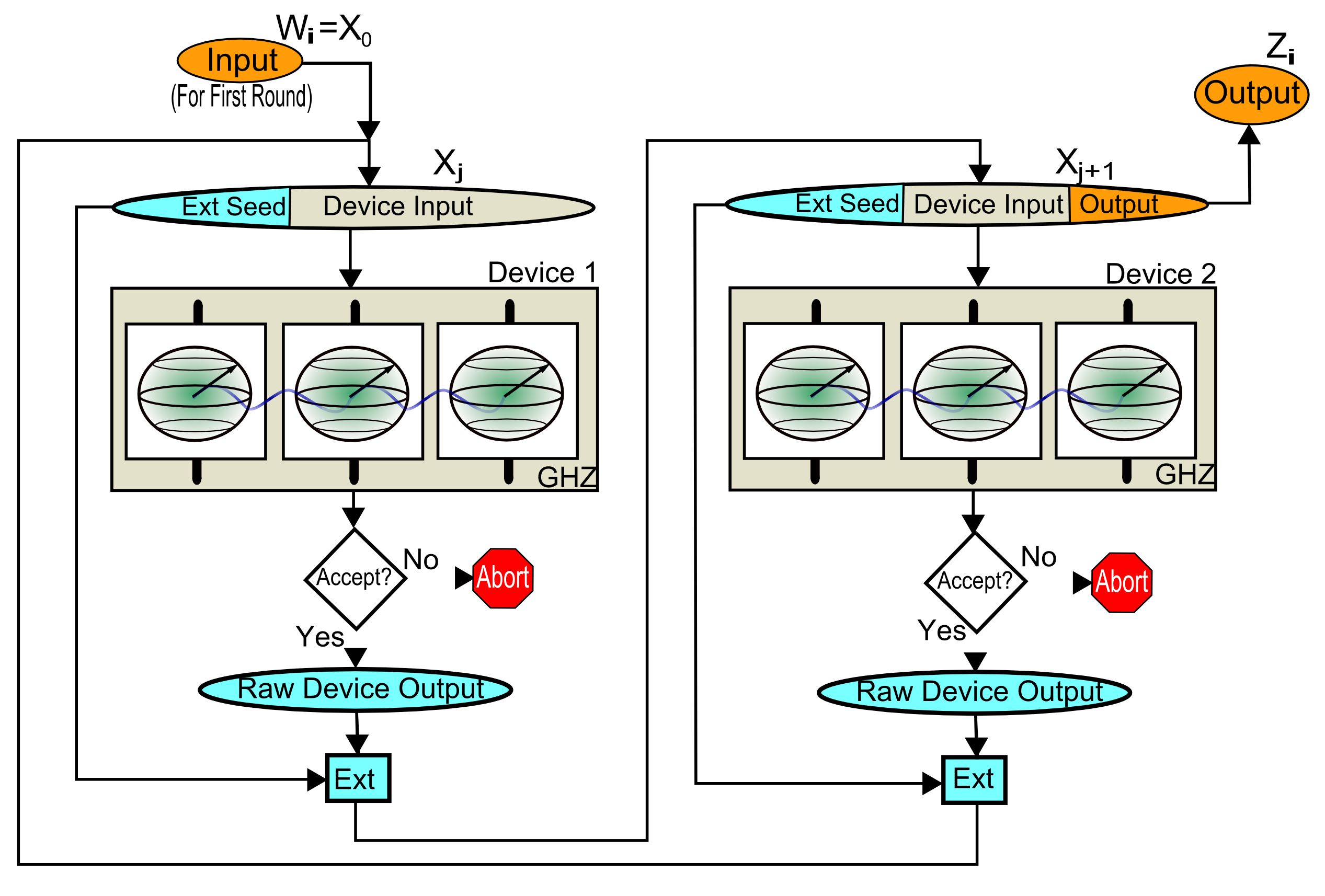}
\end{center}
\caption[]{(Color online) Representation of the MS Expansion protocol, used within the Randomness Processing Protocol. By cross feeding the outputs of the devices to each other, Alice is able to obtain the unbounded random string $Z$.
}
\end{figure}

The MS-expansion protocol uses the concatenation of two devices to achieve unbounded randomness expansion \cite{MS}. As seen in Figure 2, an input random string $X_{0}$ is fed into the first device and produces an output string $X_{1}$ which is longer and contains more min-entropy than the input. Then, string $X_{1}$ is fed into the second device, producing output $X_{2}$ which is also longer and contains more min-entropy than its corresponding input $X_1$. In this fashion, it is easily seen that alternating between the two devices, the random strings $\{X_j\}$ keep prolonging monotonically, and Alice is free to repeat this protocol as many times as possible to achieve unbounded expansion.

One may note however that in between device uses, the output must be processed through Trevisan's extractor, which provides security against quantum side information. Since the extractor requires two inputs, in reality not all of string $X_j$ is fed into a device. Rather a part of it is kept to seed the extractor, which will operate on the raw expanded output of the device. Afterwards, Alice may choose to run the expansion on the whole string $X_{j+1}$, or directly use some of the bits as an output sequence (as depicted in Figure 2).

The specific expansion protocol used to obtain the longer and more random output $X_{j+1}$, from the shorter input $X_j$ is given in Appendix C. For the moment, let's assume that Alice is running the protocol based on the GHZ non-local game, and that the size of her desired output is $N=|X_{j+1}|$. Then, Alice will feed $N$ different inputs into the components of her device which are in charge of violating the GHZ-Bell inequality. The majority of the time Alice will use a predefined input for her device's components (say $111$), and record the output of the first component (these are the so-called \textit{generating rounds}). However, in order to be sure that the components are indeed outputting random strings, Alice needs to run statistical tests on her device. For this, she will select a random subset of the $N$ inputs to actually ``play'' the GHZ game -- i.e.\ the inputs to the device components are chosen at random from the set $\{111,100,010,001\}$. The GHZ game is won if $a_1\oplus a_2 \oplus a_3 = x_1 \wedge x_2 \wedge x_3$ , where the $a_i$ are the output of the components, and the $x_i$ the corresponding inputs. If during these \textit{game rounds}, the device loses more often than allowed by the error tolerance parameter (optimized later), then Alice aborts. Otherwise, she now has a new random string $X_{j+1}$ which has more min-entropy than what she started with.

Finally, Alice will have a fully secret string $\mathcal{Z}$ with respect to Eve. If the
security of the string is high enough, this can be used to implement
the now standard protocols of \cite{VV,BCK} or even the new QKD
protocol of \cite{MS}. However, it is typically assumed that both
Alice and Bob have access to RNG's or initial randomness. Now, only
Alice has randomness available, and she must publicly broadcast to
Bob what to measure. One way for this to be secure, would be to
require that Alice and Bob were already sharing all entangled pairs
from the start. A way around this would be for Alice to wait
until Bob has received his device (i.e.\ part of the entangled
pair), and afterwards Alice would broadcast Bob's corresponding
measurement setting (see Figure 3). This eliminates the need of the vast quantum memory
of the former approach. What is needed is that there
exist quantum states and measurement settings such that each step in
the protocol would be passed by honest parties, which both
approaches possess.


\begin{figure}[t]
\begin{center}
\includegraphics[scale=1.1]{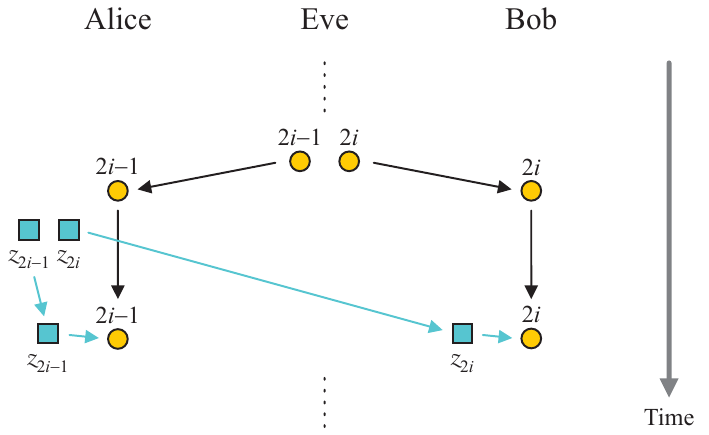}
\end{center}
\caption[]{(Color online) Space-time scheme of a QKD protocol
without initial randomness. From her secret message $\mathcal{X}$,
Alice has already established a sequence $\mathcal{Z}$ of bits $z_i$
unknown to Eve. Eve sequentially sends pairs of particles labeled
with $2i-1$ and $2i$ $(i=1,2,…)$ to Alice and Bob, respectively.
Once Bob confirms he received particle $2i$, Alice sends the bit
$z_{2i}$ to Bob, which he uses as a setting. Alice uses $z_{2i-1}$
for her own particle.}
\end{figure}

\section{Security Analysis}
In this section we analyze the security of the protocol. Our
starting point is that Alice holds a message of length $n$ that she
wants to communicate to Bob, and said message has min-entropy $k$
(conditioned on Eve and the devices). For the protocol to work, it
is part of the assumption that Alice can estimate (lower bound) the value $k$, which is
also a commonly implied assumption in other protocols such as
\cite{CR,CSW,MS}.

Of course all of the sub-protocols we are utilizing here have been
proven secure by their corresponding authors, but their composition
is a non-trivial task. Also, the point of view we take here is that
Alice has no further access to randomness, so there will be a lower
bound for the security parameters (since these are functions of $k$,
and it is finite). We also note that since this is a proof of
principle, the  requirement of exponentially many different devices
that arises from the scheme of Chung, Shi, and Wu is something we do
not intend to improve, and we rest content with having a finite
amount of devices.

The first part we analyze is Trevisan's Extractor, proven to be
secure against quantum adversaries \cite{DPVR,Tre,RRV,TSSR}. This
will create an output of size $m<k$ an $\eps_{T}$-distance from
uniform. The following Lemma gives a bound on the error and seed
length needed. For an explicit and detailed proof, see Appendix B.
\begl{Trevisan's Extractor} \nl For a message $\mathcal{X}$ with
min-entropy $k$, $0<m<k$, there exists an $m$-bit quantum proof extractor
Ext($k,\eps_T$), using a seed of length \begeq d = \left(7+k-m+\log
|\mathcal{X}| \right)^2\frac{\log (4m)}{\ln2}
\endeq
and with error \begeq \eps_{T} =
3\,m\,2^{-\frac{1}{8}(k-m)+\frac{1}{4}}.
\endeq
\endl

For analyzing the security of Miller and Shi's expansion protocol,
we must choose a nonlocal game to be played. In what follows we
shall use the GHZ game, with $t=3$. Besides having a large
quantum-classical gap and having an optimal strategy that wins with
probability 1, both \cite{MS,GA} have considered it for their
analysis. Concretely, there exist carefully optimized parameters to
implement the Miller-Shi unbounded protocol with a uniform seed,
such that the security parameter decreases exponentially with the
seed length $m$: \begeq \eps_{MS} = 2^{\frac{\alpha-m}{\beta}},
\endeq
with constants $\beta$= 31328, and $\alpha=$ 120,931. See Appendix C for further details.

It is interesting to note that while the expansion error $\eps_{MS}$
decreases exponentially with the input length $m$, the error of the
quantum proof extractor grows exponentially with the output size
$m$. Hence, there is a direct trade off, and Alice must choose $m$
accordingly to her error goals in an easy optimization problem. For
simplicity though, Alice can take e.g.\ $m=k/2$.

Finally, Chung, Shi, and Wu's main result gives the soundness and
completeness errors one obtains after having performed extraction
and expansion with each of the $2^d$ seeds and summing all of the
outputs modulo 2. The answer is a function of both the extraction
and expansion error, as well as the error tolerance $\eta$, which
comes from the Miller-Shi expansion protocol. In particular, the
security parameter $\delta$ of the whole randomness processing
protocol will be given by $\delta = \max\left( \frac{\eps_T +
\eps_{MS}}{\eta} ,\eps_{MS} + 2 \sqrt{\eps_T} +2\eta \right)$, using
a total of $6\cdot2^d$ device components \cite{CSW}. This leads us to our
first main result (proof in Appendix D).

\begt{Security of Randomness Processing}\nl If Alice performs the
Randomness Processing Protocol on her message $\mathcal{X}$ with
min-entropy $k$, the output string $\mathcal{Z}$ is
cryptographically secure. That is, the security parameter $\delta$
is exponentially small in $k$.
\endt

It is worth noting that there is some threshold value for this
protocol $k\gtrapprox$ 200,000, under which it will not work at all.
This is reminiscent of the 225,000 bits of min-entropy that are
needed to have unbounded expansion with the MS-protocol and a
security parameter of $\epsilon = 10^{-1}$ \cite{GA}. That is, in order to achieve a fixed security parameter target for randomness expansion, the amount of input min-entropy must be above some threshold. In any case,
we imagine $k$ to be large enough so that the security parameter is
sufficiently small.



Now that Alice has the random string $\mathcal{Z}$, she is ready to
apply, together with Bob, the DIQKD protocol of either \cite{VV} or \cite{MS}, which have their respective errors $\eps_c, \eps_s$. For a moment let us assume that $\mathcal{Z}$ is a perfectly random
string, then the Equivalence Lemma of \cite{CSW} would guarantee
that the completeness and soundness errors of the DIQKD protocol
would remain the same even if Eve learned most of $\mathcal{Z}$
later on (making this semantically secure). However, $\mathcal{Z}$ has security parameter $\delta$,
exponentially small in $k$, and this will add to the errors of the
protocol (which could be understood as a consequence of the composability of the protocols \cite{MR}). Note that the string $\mathcal{Z}$ is indeed random to the devices
in the DIQKD protocol, since the initial message had min-entropy $k
= H_{\text{min}}(\mathcal{X}|ED)$ conditioned on both the randomness
processing devices $D$, and Eve (who is the one who potentially will
create the DIQKD devices). We formalize this in our second main theorem, which is proven in Appendix E.

\begt{Security of CDIQKD}\nl Let there be a DIQKD protocol which
requires a perfect Random Number Generator and which has completeness and
soundness errors $(\eps_c,\eps_s)$. Then, Alice can perform the
Randomness Processing Protocol on her secret message $\mathcal{X}$
with min-entropy $k$, to produce a secure random output
$\mathcal{Z}$ and perform CDIQKD with errors
$(\eps_c+\delta,\eps_s+\delta)$, where $\delta = 2^{
-\Omega\left(k\right)}$.
\endt


\section{Discussion}
We have shown that even in the absence of randomness generators,
Alice can securely perform DIQKD. This is indeed a remarkable fact,
since it is commonly assumed that without initial randomness no
security can be achieved. In this article, we have made a proof of
principle based on the assumptions given. Note however, that our protocol still required the use of a classical authenticated channel which traditionally is established using a shared secret key between the honest parties. At first sight this seems to call into question the result of this paper. However remark that the authenticated channel does not have to be established each time the parties wish to send a message to each other. As stated in Assumption 2, we consider the authenticated channel to be a black box resource, that the parties could have established a long time in the past. A shared arbitrarily weak key suffices for this task, as shown in \cite{RW03}. Traditional DIQKD relies on a further assumption, namely that the parties hold private secure random number generators, which they use to obtain inputs for the protocol. The security of the output randomness of these RNGs could be subject to question especially if these were prepared by an external adversary. The issue this paper addresses is therefore the removal of this crucial assumption in a fairly general framework for DIQKD. Finally, a secret key shared by the parties could replace the message in the presented protocol if it is of sufficiently high min-entropy. 

 We leave further generalizations and optimizations for future work. For example, we conjecture that our scheme can be simplified to use a significantly smaller number of devices and that it can be generalized to be secure against no-signaling adversaries also, leading to drop the validity of quantum mechanics as an assumption.

\textit{Acknowledgements.}  E.A.\ would like to thank R.\ Renner and K.\ Horodecki for
helpful discussions, and M.\ Farkas for reviewing the manuscript. This work was supported by NCN grant
2013/08/M/ST2/00626, the IDSMM program of the University of Gdansk,
and the National Quantum Information Centre in Gdansk. J.K.\
acknowledges support from the European Union Integrated Project
Simulators and Interfaces with Quantum Systems. R.R.\ acknowledges support from the ERC AdG grant QOLAPS, and the Foundation for Polish Science TEAM project co-financed by the EU European Regional Development Fund.

\onecolumn
\newpage
\appendix
\begin{center}
\Large{\textbf{Appendix}}
\end{center}

\section{Definitions and Notation}
In this section, we formalize some important definitions, which were
just mentioned conceptually in the main text. Throughout this whole
article, as is common in information science, $\log(x)=\log_2(x)$.

\begd{Conditional Min-Entropy}\nl Let
$\rho_{AB}\in\mathcal{D}(\mathcal{H}_A\otimes\mathcal{H}_B)$, the
min-entropy  of $A$ conditioned on $B$ is: \begeq
H_{\text{min}}(A|B)_{\rho_{AB}} = \max\{\lambda\in\RR : \exists \SP
\sigma_B \in \mathcal{D}(\mathcal{H}_B) \SP \text{s.t.} \SP
\rho_{AB} \leq 2^{-\lambda} \ONE_A \otimes \sigma_B \}
\endeq
\endd

Here, $\mathcal{D}(\mathcal{H})$ represents the set of density
matrices in Hilbert space $\mathcal{H}$. For the completeness and
soundness errors, we use the definitions given by \cite{CSW}, since
our security parameter is based on the maximum of these quantities.
Before that, we must specify what is meant by a physical system.

\begd{Physical System \cite{CSW}}\nl A physical system $\mathcal{S}$
is defined on an arbitrarily large, but finite Hilbert space
$X\otimes D \otimes E$, with a classical source $\mathcal{X}$ of
length $n$ with $k$ min-entropy, $t$ untrusted devices
$D=(D_1,\dots,D_t)$, and an adversary $E$. To each device $D_i$
there corresponds a quantum interactive algorithm $A_{D_i}$ that
applies on $D_i$ which outputs at most $m$ bits.
\endd

Usually this is also called an $(n,k,t,m)$-Physical Source, where in
our scenario the min-entropy $k$ the message has is conditioned on
both $E$ and $D$. Hence, a physical system $\mathcal{S}$ is
specified by a state $\rho_{XDE}$ and the algorithms the devices
will follow $\{A_{D_i}\}$, but the latter are usually irrelevant for
the security analysis.

Any randomness processing protocol (e.g.\ for amplification or
expansion) can be viewed as a quantum channel
$\Phi:\mathcal{D}(X\otimes D) \rightarrow \mathcal{D}(O \otimes Z
\otimes X \otimes D)$, also called \textit{Physical Randomness
Extractors} (since they act on physical systems). The new Hilbert
spaces $O \otimes Z$ are for a decision bit $o$ which will tell us
to accept or reject the implementation of the protocol (if e.g.\ the
Bell test was not passed with confidence), and the new output random
string $\mathcal{Z}$. If the physical randomness extractors require
perfectly random inputs, i.e.\ they are designed to work on
$(n,n,t,m)$-physical systems, they are called \textit{Seeded
Physical Randomness Extractors}.

\begd{Completeness Error \cite{CSW}}\nl There exist honest devices
$D=(D_1,\dots,D_s)$ with internal state $\sigma_D$ and algorithms
$\{A_{D_i}\}$, with each device outputting at most $m$ bits such
that for any $(n,k,s,m)$-physical system $\mathcal{S}$ satisfying
$\tr_{XE}[\rho]=\rho_D=\sigma_D$, we have \begeq
\mathbb{P}[\text{Acc}(\rho)]\geq 1 - \eps_c
\endeq
Where Acc$(\rho)$ denotes the event that the protocol accepts on the input state of the device and source supplied to the physical randomness extractor, when applied to $\mathcal{S}$, (i.e.\ $o=\text{Acc}$).
\endd

In other words, this tells us that if we are using honest devices,
we will accept the protocol with high probability. The soundness
error, in turn tells us how close we are to a truly random output
(i.e.\ a uniform distribution), conditioned on accepting the
protocol.

\begd{Soundness Error \cite{CSW}}\nl Suppose the physical system
$\mathcal{S}$ is equipped with a decision bit $O$, then the projection of the output
$\Phi[\rho]$ to the Acceptance subspace is at most an $\eps_s$ distance away from a state of
the form $U_Z\otimes \xi_{XE}$ conditioned on accepting, where
$\xi_{XE}$ is some classical quantum state. General output
states are decomposed as $\Phi\circ\Gamma_{E}[\rho]=|\text{Acc}\ra \la \text{Acc}|
\otimes \sigma^{\text{Acc}}_{ZXDE} + |\text{Rej}\ra \la \text{Rej}|
\otimes \sigma^{\text{Rej}}_{ZXDE}$, where $\Gamma_E$ is an arbitrary quantum channel on Eve's system. We require that there exists a
state $\xi$ such that $\xi_{ZXE}^{\text{Acc}} = U_Z\otimes \xi_{XE}$
and \begeq ||\sigma^{\text{Acc}}_{ZXE}- \xi_{ZXE}^{\text{Acc}}||\leq \eps_s
\endeq
Where $\sigma^{\text{Acc}}_{ZXE}$ is the subnormalized output after tracing out device $D$, and $U_Z = \frac{1}{|Z|} \ONE$ is the uniform distribution.
\endd

An important result which we will need for our analysis, which has
to do with physical randomness extractors, is the following lemma:

\begl{Equivalence Lemma \cite{CSW}}\nl Let $\Phi$ be a seeded
physical randomness extractor, with seeds $X$ which are perfectly
random to both Eve and the Devices (i.e.\ $H_\text{min}(X|DE)=n$),
have parameters $(\eps_s,\eps_c,\eta)$. Then the same physical
randomness extractor $\Phi$, when applied to an input which is
perfectly random to just the devices (i.e.\ $H_\text{min}(X|D)=n$)
will have the same parameters $(\eps_s,\eps_c,\eta)$.
\endl

The moral being, that the crucial thing is that the input is random
to the devices used.

\section{Quantum Strong Extractors}

In this section, we analyze the security of Trevisan's extractor
from reference \cite{DPVR}, to prove Lemma 4.1. We begin with a
formal definition of a quantum-strong extractor. \begd{Quantum Proof
Strong Extractor}\nl Ext:$\{0,1\}^n \times \{0,1\}^d \rightarrow
\{0,1\}^m$ , is an $m$-bit quantum proof $(k,\eps)$-strong
extractor, if for all states $\rho_{XE}$ classical on $X$ with
$H_{\text{min}}\geq k$, and a uniform seed $Y$ of length $d$, we
have: \begeq \frac{1}{2}|| \rho_{\text{Ext}(X,Y)YE} -
U_m\otimes\rho_y\otimes \rho_E|| \leq \eps
\endeq
With $||\cdot||$ the trace-norm, and $U_m$ the totally mixed state in $\CC^{2^m}$.
\endd
The classical version of this definition ignores the quantum state
$E$ and uses the variational distance in equation (B.1). Explicitly,
a $(k,\eps)$-\textit{strong extractor} satisfies
$\frac{1}{2}||Ext(X,Y)\circ Y - U_m \circ Y || \leq \eps$. The main
theorem of \cite{DPVR} relates the security of 1 bit
$(k,\eps)$-strong extractors to $m$-bit extractors which are quantum
proof. This is done via means of \textit{weak $(t,r)$-designs},
which are just families of partioning sets -- otherwise irrelevant
here.

\begt{Trevisan's Extractor is Quantum Proof (Theorem 4.6 of
\cite{DPVR})}\nl Let $C:\{0,1\}^n \times \{0,1\}^t \rightarrow
\{0,1\}$ be a $(k,\eps)$-strong extractor with uniform seed and
$S_1,...,S_m \subset [d]$, a weak $(t,r)$-design. Then $\exists$ an
extractor $\text{Ext}_C:\{0,1\}^n \times \{0,1\}^d \rightarrow
\{0,1\}^m$, which is a $(k+rm+\log(1/\eps),3m\sqrt{\eps})$-quantum
proof strong extractor.
\endt

The existence of such weak designs is given by \cite{RRV}.

\begl{Existence of weak $(t,1)$-designs (Lemma 17 of \cite{RRV})}\nl
$\forall t, m\in \NN, \exists$ weak $(t,1)$-design $S_1,...,S_m
\subset [d]$ such that $d=t\lceil{\frac{t}{\ln
2}}\rceil\lceil{\log(4m)}\rceil$. Furthermore such a weak-design can
be found in Poly$(m,d)$ time and Poly$(m)$ space.
\endl

For the 1-bit extractor $C$, we will use \textit{list-decodable
codes} -again, for our purposes all we require is their existence
and that they can be found efficiently. This was implicitly proven
by \cite{Tre,RRV}, and explicitly stated in \cite{DPVR} Theorem C.3.

\begl{List Decodable Codes are 1-bit Extractors (Theorem C.3 of
\cite{DPVR})}\nl Let $C:\{0,1\}^n \rightarrow \{0,1\}^{\bar{n}}$ be
an $(\eps,L)$-list decodable code. Then $\exists$
$\text{Ext}_C:\{0,1\}^n\times[\bar{n}]\rightarrow\{0,1\}$, which is
a $(\log L + \log\left(\frac{1}{2\eps}\right) , 2\eps)$-strong
extractor, created from code $C$.
\endl

Finally, we need an existence theorem for list decodable codes.

\begl{Existence of List Decodable Codes (Lemma C.2 of \cite{DPVR},
Theorem 24 of \cite{GHSZ})}\nl $\forall n\in \NN$, and $\eps>0$,
$\exists$ a code $ C_{n,\eps}:\{0,1\}^n\rightarrow
\{0,1\}^{\bar{n}}$ which is $(\eps,1/\eps^2)$ list decodable.
Furthermore $\bar{n}$ can be assumed to be a power of 2, and
satisfies the bound $\bar{n}\leq 32 n/\eps^4$. The code $C_{n,\eps}$
can be evaluated in Poly$(n,1/\eps)$ time.
\endl

With all of this in mind, we are ready to prove Lemma 4.1. This is
an analogous result to Corollary 5.3 of \cite{DPVR}, and \cite{GA}
has also made a similar analysis.

\begl{(Lemma 4.1 from main text)}\nl For a message $\mathcal{X}$
with $k$ min-entropy, $m<k$, there exists an $m$-bit quantum proof
extractor Ext($k,\eps_T$), using a seed of length \begeq d =
\left(7+k-m+\log |\mathcal{X}| \right)^2\frac{\log (4m)}{\ln2}
\endeq
and with error \begeq \eps_{T} =
3\,m\,2^{-\frac{1}{8}(k-m)+\frac{1}{4}}.
\endeq
\endl

\begin{proof}Lemma 4.1\nl
To facilitate the proof of this lemma, which involves many different
concepts and parameters, we refer to Figure 4. Notice
that the notation is slightly different from the statement of the
lemmas, to make it more consistent throughout the proof.

\begin{figure}[t]
\begin{center}
\includegraphics[scale=0.85]{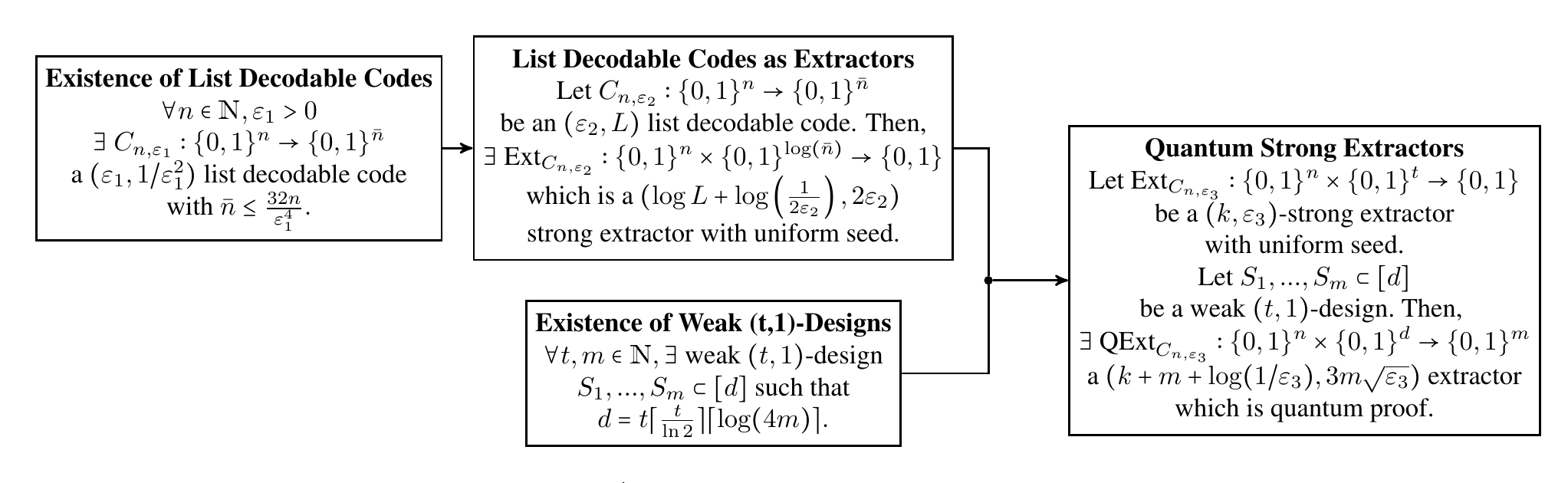}
\end{center}
\caption[]{Schematic diagram for the proof of Lemma 4.1.}
\end{figure}

We take $\bar{n}$ to be a power of 2, $\bar{n}=2^t$. Next,  we
create a $(\delta,1/\delta^2)$ List-Decodable Code from Lemma B.3,
so that we are guaranteed the existence of a 1-bit
$(3\log\left(\frac{1}{2\delta}\right)+2,2\delta)$-strong extractor
$C:\{0,1\}^n\times\{0,1\}^t\rightarrow\{0,1\}$, with the help of
Lemma B.2. We consider the worst case (saturated) bound on
$\bar{n}$: \begeq \bar{n} = \frac{32n}{\delta^4} \rightarrow t =
\log \left( \frac{2^5 n}{\delta^4} \right)
\endeq

Equipped with this 1-bit extractor, we shall now use Theorem B.1 to
create an $m$-bit extractor which is quantum proof. Direct
application of the Theorem yields a $(4
\log\left(\frac{1}{2\delta}\right) + m + 2,
3m\sqrt{2\delta})$-quantum proof extractor. We want the final error
of the extractor to be $\eps$, hence we take
$\delta=\eps^2/(2\cdot9m^2)$, to get a quantum proof
$(8\log\left(\frac{m}{\eps}\right) + m + 2 + 8\log3 , \eps)$-strong
extractor. Now, in order for this extractor to work, we need the
min-entropy of the input message to satisfy $k\geq
8\log\left(\frac{m}{\eps}\right) + m + 2 + 8\log3$. Manipulating
this inequality gives us the minimal error the output of the
extractor can have. \begeq \eps \geq 3\,m\,2^{\frac{2+m-k}{8}}
\endeq
Finally, the $t$ from Equation (B.2) is the same appearing in Lemma
B.1, related to the $(t,1)$ designs. Since we are bounding the
number of devices (and hence seed length), we will ignore the
ceiling operators from Lemma B.1, which in the limit of large $t$
and $m$ will be negligible. Hence, the seed length for Trevisan's
extractor will be $d = \log^2\left( 2^9 3^8\frac{ n m^8 }{\eps^8}
\right)\frac{\log(4m)}{\ln2}$ (having substituted in the value for
$\delta$). If now, we take the lowest bound from Equation B.5 for
the error $\eps$ we obtain \begeq d = \left(7+k-m+\log n
\right)^2\frac{\log (4m)}{\ln2}
\endeq
It is interesting to note that the error $\eps$ only depends on $m$
and $k$, having a direct trade off between the available min-entropy
and how large of an output we desire. Meanwhile, $d$ depends on all
parameters but the term $k-m$ has opposite sign, showing
qualitatively that the error and seed length are inversely related.
\end{proof}

\section{Randomness Expansion}

In this section, we explicitly analyze the protocol that we are
using for expansion, namely the one given by Miller and Shi
\cite{MS}. In particular, we choose this protocol since it provides
cryptographic security, i.e.\ the error parameters are exponentially
small and are negligible in the running time of the protocol. It
also tolerates a constant level of noise, where e.g.\ it was shown
that any device which wins the GHZ game with probability at least
0.985 will achieve exponential randomness expansion with probability
approaching unity. Finally, and very important for us, with the
Equivalence Lemma (as given by \cite{CSW}) this is able to produce
unbounded expansion using only two devices -- by realizing that the
expansion protocol is indeed a physical randomness extractor.

In what follows, for simplicity, we will restrict the protocol to
playing the GHZ game where the optimum quantum strategy wins with
probability 1, and will refer the readers to \cite{MS,GA} for the
generic version.

The unbounded protocol, is just a concatenation of their one-shot
protocol, so we provide the latter here. For that, we need to define
the variables needed: $N\in \NN$, is the\textit{ output length},
$\eta\in(0,\frac{1}{2})$ the \textit{error tolerance}, denoting how
much of a statistical error the components are allowed to make
relative to the optimal winning strategy's expectation, and
$q\in(0,1)$ the \textit{test probability} which denotes the chance a
given round will be a game (Bell) round. The protocol is then:

\begin{enumerate}
\item A bit $g$ is chosen according to the distribution $(1-q,q)$.
\item If $g=1$ ("game round"), then an input string from $\{111, 001, 010, 100\}$ is chosen at random to play the GHZ game. If the GHZ game is won then output $0$, else output $1$ and record "Failure" $F$.
\item If $g=0$ ("generating round"), the string $111$ is used as input on the device $D=(D_1,D_2,D_3)$. Record the output of the first component $D_1$.
\item Repeat steps 1--3, $(N-1)$ more times.
\item If the total number of failures $F$ exceeds $\eta q N$, the protocol \textit{Aborts}. Otherwise, the protocol \textit{Succeeds}, and the output $N$-bit sequence is recorded.
\end{enumerate}

In general, the one-shot protocol as given above can (for the right
choice of parameters $\eta, q , N$) provide an output which is
$\eps$-close to having $(1-\delta)N$ min-entropy for any choice of
$\delta$, and $\eps$ exponentially small as a function of $N$. Gross
and Aaronson have optimized over the parameters $(\eta , q , N)$ and
given a bound on the initial seed length needed to get unbounded
expansion \cite{GA}. In particular, they display a linear dependence
on $\log(1/\eps)$, giving the actual slope to be $\beta =$ 31328.
Then, they state that the upper bound on seed length needed to get
security of $\eps=10^{-1}$ is 225,000. From this, simple
substitution gives $\alpha \leq 120931$, and hence Lemma 4.2. We
note that they also give a bound of 715,000 bits needed to achieve
$\eps=10^{-6}$, but this gives a lower value of $\alpha$ (= 90,584),
so we conservatively kept the upper bound. For asymptotic
statements, these constants are irrelevant so long they remain
positive.

\section{Security of Randomness Processing}

In this section, we follow the analysis of \cite{CSW}, to prove the
security of our randomness processing protocol, as given in Section
3 of the main text.

Hence for our analysis, the following theorem is crucial.

\begt{Chung-Shi-Wu Theorem \cite{CSW}}\nl Let $0<\eta<1$ be the
error tolerance parameter. Let $X$ be an $n$-bit string with $k$
min-entropy. Let Ext$(k,\eps_T):\{0,1\}^n \times \{0,1\}^d
\rightarrow \{0,1\}^m$, be an $m$-bit quantum proof extractor, with
seed length $d$. Let there be a protocol $\Phi$ (also called
physical randomness extractor), which takes a perfectly random seed
of length $m$ to produce an output random string $z$,  together with
a decision bit $\mathcal{o}$, with completeness error $\eps_c$ and
soundness error $\eps_s$. If for every $S_i\in\{0,1\}^d$, we perform
$\Phi[Ext[X,S_i]]=Z_i$, ($Ext[X,S_i]$ being Trevisan's extractor
applied on string $X$ using seed $S_i$), then the protocol producing
the output string $\mathcal{Z}= \bigoplus Z_i$ has: \nl Completeness
Error $\frac{\eps_{c}+\eps_T}{\eta}$\nl Soundness Error
$\eps_{s}+2\sqrt{\eps_T}+2\eta$\nl provided less than an
$\eta$-fraction of protocol $\Phi$ applications were rejected.
\endt

This is the exact form of the randomness processing protocol that we
have given in the main text (Figure 1), where $\Phi$ will be Miller
and Shi's unbounded expansion protocol. We will take the
MS-expansion security parameter $\eps_s = \eps_c = \eps_{MS}$ given
by \cite{GA}. Here, we are still left with our errors as functions
of $m$ , $k$ , and now (from the previous theorem) $\eta$. To have a
bound on the security parameter, we will find explicit functions for
$m$ and $\eta$, depending only on Alice's min-entropy $k$. We thus
have all the ingredients to prove Theorem 4.1. We note however that
since this article focuses on a proof of principle, the following
proof is done such that it is clear to follow at the expense of not
choosing the most optimum coefficient for the exponential decay in
the security parameter.

\begin{theorem}{Security of Randomness Processing (Theorem 4.1)}\nl
If Alice performs the Randomness Processing Protocol on her message
$\mathcal{X}$ with min-entropy $k$, the output string $\mathcal{Z}$
is cryptographically secure. That is, $\exists$ a constant
$\gamma>0$ such that the security parameter is $\delta =
O\left(2^{-\gamma k}\right)$.
\end{theorem}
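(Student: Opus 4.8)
Plan: The goal is to show that the security parameter $\delta$ of the composed protocol decays exponentially in $k$, i.e.\ $\delta = O(2^{-\gamma k})$ for some $\gamma > 0$. The Chung–Shi–Wu Theorem (D.1) hands us the exact form of $\delta$, namely
\[
\delta = \max\!\left( \frac{\eps_{MS}+\eps_T}{\eta},\; \eps_{MS}+2\sqrt{\eps_T}+2\eta \right),
\]
so the entire task reduces to choosing the free parameters — the extractor output length $m$ and the error tolerance $\eta$ — as explicit functions of $k$ and then verifying that every term inside the $\max$ is exponentially small. The key inputs are Lemma 4.1, which gives $\eps_T = 3m\,2^{-\frac{1}{8}(k-m)+\frac14}$, and Lemma 4.2, which gives $\eps_{MS} = 2^{(\alpha-m)/\beta}$ with $\alpha,\beta$ fixed positive constants.

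Let me sketch the parameter choices. First I would set $m = k/2$, as suggested in the main text, so that the two competing exponentials are balanced: with this choice $\eps_T = \frac{3k}{2}\,2^{-k/16+1/4}$ decays like $2^{-k/16}$ (up to the polynomial prefactor, which is absorbed into any slightly smaller exponent), while $\eps_{MS} = 2^{(\alpha-k/2)/\beta}$ decays like $2^{-k/(2\beta)}$. Both are therefore $2^{-\Omega(k)}$. The remaining nuisance is $\eta$, which appears in the denominator of the first branch and linearly in the second. The plan is to pick $\eta$ itself to be exponentially small but decaying strictly more slowly than $\eps_{MS}+\eps_T$ — for concreteness, something like $\eta = 2^{-\gamma' k}$ with $\gamma'$ chosen smaller than the decay rates of both $\eps_T$ and $\eps_{MS}$. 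Then $2\eta$ in the second branch is manifestly $2^{-\Omega(k)}$, and the first branch $\frac{\eps_{MS}+\eps_T}{\eta}$ is a ratio of exponentials whose numerator beats the denominator, again leaving $2^{-\Omega(k)}$. Setting $\gamma$ to be the minimum of the three effective decay rates yields $\delta = O(2^{-\gamma k})$.

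I would then close by checking the two constraints that make the choices admissible. One is the hypothesis $0<m<k$ of Lemma 4.1, which $m=k/2$ obviously satisfies. The other, more delicate, is that Lemma 4.2 only yields a meaningful (sub-unit) $\eps_{MS}$ once $m > \alpha$, i.e.\ once $k > 2\alpha \approx 2\cdot 120{,}931$; this is precisely the threshold $k \gtrapprox 200{,}000$ flagged in the main text, and I would state explicitly that the theorem is asserted in the regime $k$ above this threshold, where all expressions are genuinely small. The constant $\gamma$ can be read off as $\gamma = \min\{\tfrac{1}{16}-\epsilon',\, \tfrac{1}{2\beta}-\epsilon',\, \gamma'\}$ for a suitable slack $\epsilon'>0$ that swallows the polynomial prefactors $3m$ and $\tfrac{3k}{2}$.

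The main obstacle is not any single calculation but the bookkeeping of the denominator $\eta$: one must confirm that $\eta$ can be made small enough to control the soundness branch (where it enters additively) while simultaneously remaining large enough relative to $\eps_{MS}+\eps_T$ that the completeness branch $\frac{\eps_{MS}+\eps_T}{\eta}$ does not blow up. Because these pull in opposite directions, the argument hinges on there existing a window of exponential rates — a numerator decaying faster than the denominator — and verifying that such a window is nonempty is the crux. Once the rates are ordered correctly, everything else is routine substitution.
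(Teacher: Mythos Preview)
Your proposal is correct and follows essentially the same route as the paper: set $m=k/2$, set $\eta=2^{-\gamma' k}$ for a rate $\gamma'$ strictly below the decay rates of $\eps_T$ and $\eps_{MS}$, and read off the overall exponent as the minimum of the competing rates. The only slip is bookkeeping in your final displayed minimum --- the completeness branch contributes rates $\tfrac{1}{16}-\gamma'$ and $\tfrac{1}{2\beta}-\gamma'$ (subtracting $\gamma'$, not the slack $\epsilon'$), and the soundness branch also contributes the $\sqrt{\eps_T}$ rate $\tfrac{1}{32}$ --- but since $\tfrac{1}{2\beta}$ is by far the smallest, the optimal choice $\gamma'=\tfrac{1}{4\beta}$ still gives the paper's $\gamma\ge 1/125312$.
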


\begin{proof}{Theorem 4.1}\nl
We begin by writing the error parameters that arise from the
Trevisan Extractor as $\eps_T=c_1 m\, 2^{-c_2(k-m)}$, and the one
from the Miller and Shi expansion protocol as $\eps_{MS}=c_3\,
2^{-c_4 m}$, for some constants $c_1,c_2,c_3,c_4$. Where $k$ is the
min-entropy of the message $\mathcal{X}$, and $m$ is both the output
length of Trevisan's Extractor, and the input size of the expansion
protocol. For simplicity, we shall take $m=k/2$, which yields as
errors:
\begin{align*}
\eps_T &=\frac{c_1}{2} k\, 2^{-\frac{c_2}{2}k}  \\
\eps_{MS}& =c_3\, 2^{-\frac{c_4}{2} k}
\end{align*}

From the Chung-Shi-Wu theorem D.1, we have that the security
parameter $\delta = \max \left(\frac{\eps_{MS}+\eps_T}{\eta}
,\eps_{MS}+2\sqrt{\eps_T}+2\eta \right)$. So we will take $\eta =
2^{-\alpha k}$, with a suitably chosen $\alpha$. To make the
security parameter as small as possible, we must choose $\alpha$
large enough so it does not dominate the soundness error but we see
that this will bring a trade-off with the completeness error. In
fact, from the completeness error, we have:
\[
\frac{\eps_{MS}+\eps_T}{\eta} = \frac{c_1}{2} k\,
2^{-\left(\frac{c_2}{2} - \alpha\right)k} + c_3\,
2^{-\left(\frac{c_4}{2} -\alpha\right) k}
\]
This requires that $2\alpha < \min(c_2,c_4)$. From the
soundness error we have:
\[
\eps_{MS}+2\sqrt{\eps_T}+2\eta = c_3\, 2^{-\frac{c_4}{2} k} +
\sqrt{2c_1 k}\, 2^{-\frac{c_2}{4}k} + 2\cdot2^{-\alpha k}
\]
From here, for the asymptotic statement, we see that we need a
choice of $\alpha$ such that the expression \nl
$\min\left(\frac{c_2}{2}-\alpha,\frac{c_4}{2}-\alpha,\frac{c_2}{4},\frac{c_4}{2},\alpha\right)$
is as big as possible, since those are the coefficients of the
exponential decay. Using our actual values for the constants $c_2 =
1/8$ (from Lemma 4.1), and $c_4=1/31328$ from \cite{GA}, we see that
the best is to take $\alpha=c_4/4=1/125312$. This completes the
proof of the theorem, with the security parameter as $\delta =
O\left(2^{-\gamma k}\right)$ with $\gamma \geq 1/125312$ (since our
choice of $m<k$ was the simplest).
\end{proof}

In fact, using the same values for the constants $c_2,c_4$, if we
instead take $m=\frac{3916}{3917}k$, and $\eta=2^{-k/62672}$, we can
get a better $\tilde{\gamma}=1/62672$, which is almost a factor 2
better than the exponent given in the theorem. Note further that for
asymptotic statements, for any $\eps,\alpha>0$, we have
$\text{poly}(x)e^{-\alpha x} = O(e^{-(\alpha -\eps)x})$, so that we
essentially ignore the prefactors $k$ and $\sqrt{k}$ which appear in
the proof.

\section{Security of CDIQKD}

In this last section, we prove Theorem 4.2. \begt{Security of CDIQKD
(Theorem 4.2)}\nl Let there be a DIQKD protocol which requires a
perfect Random Number Generator, and has completeness and soundness
errors $(\eps_c,\eps_s)$. Then, Alice can perform the Randomness
Processing Protocol on her secret message $\mathcal{X}$ with
min-entropy $k$, to produce a secure random output $\mathcal{Z}$ and
perform CDIQKD with errors $(\eps_c+\delta,\eps_s+\delta)$, where
$\delta = 2^{-\Omega \left( k \right)}$.
\endt
\begin{proof}{Theorem 4.2}\nl
We need to check the security of the DIQKD protocol, given that
$\mathcal{Z}$ is not perfectly random, but rather has an
exponentially small error, $\delta=2^{-\gamma k}$, for constant $\gamma>0$. Let $\rho_{ZXDE}$ be the output state of
the randomness processing protocol (conditioned on accepting), then
the soundness error just means that $||\rho_{ZXDE} - U_Z \otimes
\rho_{XDE}||\leq \delta$, where $D$ refers to the devices of the DIQKD protocol.\nl \textit{Completeness:} The DIQKD
completeness error $\eps_c$ is calculated expecting perfect
randomness in the protocol. Hence
$\mathbb{P}[\text{Rej}(U_Z\otimes\rho_{XDE})]\leq \eps_c$. Here
Rej[$\rho$] denotes the event that the protocol rejects upon input
$\rho$. This immediately implies that
$\mathbb{P}[\text{Rej}(\rho_{ZXDE})]\leq \eps_c + \delta$, since the
trace norm operationally corresponds to the distinguishability
of states.\nl
\textit{Soundness:} Let $\Lambda$ be the quantum channel of the
DIQKD protocol which produces the shared key $\mathcal{Y}$ between
Alice and Bob, we write $\Lambda^{\text{Acc}}$ to denote the action of the quantum channel upon acceptance. Let $\Lambda^{\text{Acc}}[U_Z\otimes\rho_{XDE}]=\sigma^{\text{Acc}}_{YZXDE}$, then the soundness error is given by
$||\sigma^{\text{Acc}}_{YZXE} - U_Y \otimes
\sigma_{ZXE}^{\text{Acc}}||\leq \eps_s$. From the contractivity of
the trace norm under quantum channels, we have
$||\Lambda^{\text{Acc}}[\rho_{ZXDE}] - \Lambda^{\text{Acc}}[U_Z \otimes \rho_{XDE}]||\leq
||\rho_{ZXDE} - U_Z \otimes \rho_{XDE}||\leq \delta$. And by the
triangle inequality, we have the new soundness error
$||\psi^{\text{Acc}}_{YZXE}- U_Y \otimes \sigma_{ZXE}^{\text{Acc}}||\leq
\eps_s + \delta$, with $\Lambda^{\text{Acc}}[\rho_{ZXDE}]=\psi^{\text{Acc}}_{YZXDE}$.
\end{proof}

\newpage

\bibliographystyle{ieeetr}
\bibliography{msprefs}

\end{document}